\begin{document}

\bibliographystyle{acm}

\title{On the Parameterised Complexity of Induced Multipartite Graph Parameters}

\author{Ryan L. Mann}
\email{mail@ryanmann.org}
\homepage{http://www.ryanmann.org}
\affiliation{School of Mathematics, University of Bristol, Bristol, BS8 1UG, United Kingdom}
\affiliation{Centre for Quantum Computation and Communication Technology, \\ Centre for Quantum Software and Information, \\ Faculty of Engineering \& Information Technology, University of Technology Sydney, NSW 2007, Australia}

\author{Luke Mathieson}
\email{luke.mathieson@uts.edu.au}
\affiliation{School of Computer Science, University of Technology Sydney, NSW 2007, Australia}

\author{Catherine Greenhill}
\email{c.greenhill@unsw.edu.au}
\affiliation{School of Mathematics and Statistics, UNSW Sydney, NSW 2052, Australia}

\begin{abstract}
    We introduce a family of graph parameters, called \emph{induced multipartite graph parameters}, and study their computational complexity. First, we consider the following decision problem: an instance is an induced multipartite graph parameter $p$ and a given graph $G$, and for natural numbers $k\geq2$ and $\ell$, we must decide whether the maximum value of $p$ over all induced $k$-partite subgraphs of $G$ is at most $\ell$. We prove that this problem is \mbox{\textsc{W[1]}-hard}. Next, we consider a variant of this problem, where we must decide whether the given graph $G$ contains a sufficiently large induced $k$-partite subgraph $H$ such that $p(H)\leq\ell$. We show that for certain parameters this problem is \mbox{para-\textsc{NP}-hard}, while for others it is fixed-parameter tractable.
\end{abstract}

\maketitle

\section{Introduction}
\label{section:Introduction}

Structural parameters play an important role in characterising the computational complexity of graph problems. A canonical example of this is Courcelle's theorem~\cite{courcelle1990monadic, courcelle1992monadic}, which states that every graph problem definable in monadic second-order logic can be solved in fixed-parameter tractable time with respect to the treewidth. Since determining the treewidth of a graph is itself fixed-parameter tractable~\cite{bodlaender1996linear}, this provides an algorithmic template for solving many graph problems.

In this paper we introduce a family of graph parameters called \emph{induced multipartite graph parameters}. Informally, an induced multipartite graph parameter is a graph parameter that (1) is non-increasing for subgraphs of complete $k$-partite graphs, and (2) is an efficiently computable injective function over the set of complete $k$-partite graphs with $k\geq2$. This family includes a number of well-studied graph parameters, including vertex and edge connectivity, chromatic index, treewidth, and pathwidth.

We firstly consider the problem of deciding whether the maximum value of an induced multipartite graph parameter, over all induced $k$-partite subgraphs of the input graph, is at most some constant $\ell$. By establishing a reduction from the maximum stable set problem, we prove that this problem is \mbox{\textsc{W[1]}-hard} when parameterised by $k$ and $\ell$. Secondly, we consider the problem of deciding whether a graph has a sufficiently large induced $k$-partite subgraph $H$ such that the value of a given induced multipartite graph parameter on $H$ is at most a (specified) constant $\ell$. We show that the complexity of this problem can vary, depending on the specific graph parameter. For example, this problem, when parameterised by $k$, $\ell$, and the number of deleted vertices $m$, is \mbox{para-\textsc{NP}-hard} for minimum degree, maximum degree, vertex connectivity, edge connectivity, and chromatic index; and is fixed-parameter tractable for number of vertices and number of edges.

A special case of the former problem is the \emph{bipartite pathwidth}, which is defined to be the maximum pathwidth of any induced bipartite subgraph. Dyer, Greenhill, and M\"uller~\cite{dyer2021counting} introduced this concept and showed that, for graphs of bounded bipartite pathwidth, the Markov chain known as the \emph{Glauber dynamics for stable sets} is rapidly mixing. As a consequence, they obtained an efficient randomised approximation scheme for counting stable sets on graphs of bounded bipartite pathwidth, and an efficient algorithm for sampling stable sets of a given size. However, \mbox{Theorem~\ref{theorem:InducedKPartiteSubgraphParameterHardness}} implies that it is a computationally hard problem to decide whether a given graph has bounded bipartite pathwidth.

This paper is structured as follows. In \mbox{Section~\ref{section:InducedMultipartiteGraphsParameters}}, we introduce the notion of induced multipartite graph parameters and provide several examples. Then, in \mbox{Section~\ref{section:HardnessResults}}, we consider the problem of deciding, for a given graph $G$ and induced multipartite graph parameter $p$, whether the maximum value of $p$ over induced $k$-partite subgraphs of $G$ is at most a specified constant. We prove \mbox{Theorem~\ref{theorem:InducedKPartiteSubgraphParameterHardness}} which states that this problem is \mbox{\textsc{W[1]}-hard}. We then consider the problem of deciding whether, for a given graph $G$ and induced multipartite graph parameter $p$, there exists a sufficiently large induced multipartite subgraph $H$ of $G$ such that $p(H)$ is at most a specified constant. We prove that this problem is \mbox{para-\textsc{NP}-hard} for minimum degree, maximum degree, vertex connectivity, edge connectivity, and chromatic index, while in \mbox{Section~\ref{section:FixedParameterTractability}}, we prove that this problem is fixed-parameter tractable for number of vertices and number of edges. Finally, we conclude in \mbox{Section~\ref{section:ConclusionAndFurtherWork}} with some remarks and open problems. 

\section{Induced Multipartite Graph Parameters}
\label{section:InducedMultipartiteGraphsParameters}

All graphs in this paper are finite and simple. A graph parameter is a function $p$ that assigns to every graph $G$ a real number $p(G)$ which is invariant under vertex relabelling. We restrict our attention to graph parameters that take non-negative integer values, but extensions to rational values are also possible.

Write $K_{n|k}$ for the complete $k$-partite graph with $n$ vertices in each part of the vertex partition, where $n$ and $k$ are positive integers.

\begin{definition}[Induced multipartite graph parameter]
    A graph parameter $p$ is an \emph{induced multipartite graph parameter} if the following properties are satisfied.
    \begin{enumerate}
        \item[\emph{\textbf{P1.}}] If $H$ is a subgraph of $K_{n|k}$, then \mbox{$p(H) \leq p(K_{n|k})$}.
        \item[\emph{\textbf{P2.}}] For each $k\geq2$ there is an efficiently computable injective function \mbox{$f_k:\mathbb{Z}^+\to\mathbb{N}$} such that \mbox{$p(K_{n|k})=f_k(n)$}. 
    \end{enumerate}
    For a given induced multipartite graph parameter $p$, let $p(G,k)$ be the maximum value of $p(H)$ over all induced $k$-partite subgraphs $H$ of $G$.
\end{definition}

Note that when \textbf{P1} holds, the function $f_k$ in \textbf{P2} is injective if and only if it is strictly increasing.

A \emph{monotone} graph property is a graph property which is inherited by subgraphs~\cite{alon2008every}. Similarly, we say that a graph parameter $p$ is \emph{monotone} if \mbox{$p(H) \leq p(G)$} for any graph $G$ and any subgraph $H$ of $G$. Note that any monotone graph parameter will satisfy \textbf{P1}. We shall now proceed to show that many common graph parameters are induced multipartite graph parameters.

Write \mbox{$\abs{G}=\abs{V(G)}$} for the number of vertices in $G$ (the \emph{order} of $G$), \mbox{$\|G\|=\abs{E(G)}$} for the number of edges in $G$ (the \emph{size} of $G$), $\delta(G)$ for the minimum degree of $G$ and $\Delta(G)$ for the maximum degree of $G$.

A graph $G$ is $k$-\emph{connected} if $\abs{G}>k$ and $G$ cannot be disconnected by deleting fewer than $k$ vertices. Write $\kappa(G)$ for the \emph{vertex connectivity} of $G$, that is, the largest $k$ such that $G$ is $k$-connected. The \emph{edge connectivity} $\lambda(G)$ is minimum size of a set of edges whose deletion disconnects $G$.

The \emph{chromatic index} $\chi'(G)$ of a graph $G$ is the smallest positive integer $r$ such that the edges of $G$ may be coloured with $r$ colours with no two incident edges receiving the same colour.

The \emph{treewidth} of a graph is a measure of how close the graph is to a tree~\cite{robertson1984graph}. To define the treewidth, we must first define the \emph{tree decomposition} of a graph.
\begin{definition}[Tree decomposition]
    A tree decomposition of a graph $G$ is a tree $T$ and a collection $\mathcal{X}$ of subsets of $V(G)$ indexed by the vertices of $T$, such that the following properties are satisfied.
    \begin{itemize}
        \item For all $v \in V(G)$ there is least one $B\in\mathcal{X}$ with $v \in B$.
        \item For all $v \in V(G)$, the elements of $\mathcal{X}$ that contain $v$ form a subtree of $T$.
        \item For every edge $e \in E(G)$, there is at least one $B \in \mathcal{X}$ with $e \subseteq B$.
    \end{itemize}
    Elements of $\mathcal{X}$ are called \emph{bags}. The \emph{width} of a tree decomposition is \mbox{$\max_{B\in\mathcal{X}}\abs{B}-1$} (the size of the largest bag, minus one). The \emph{treewidth} $\operatorname{tw}(G)$ of a graph $G$ is the minimum width over all possible tree decompositions of $G$.
\end{definition}

The \emph{pathwidth} $\operatorname{pw}(G)$ of a graph $G$ is the minimum width over all possible tree decompositions of $G$ whose tree is a path~\cite{robertson1983graph}. Such a tree decomposition is known as a \emph{path decomposition}.

\begin{lemma}
    Each of the following is an induced multipartite graph parameter: number of vertices, number of edges, minimum degree, maximum degree, vertex connectivity, edge connectivity, chromatic index, treewidth, and pathwidth. 
\end{lemma}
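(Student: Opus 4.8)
The plan is to verify the two defining properties \textbf{P1} and \textbf{P2} separately, handling \textbf{P1} by distinguishing the hereditary parameters from the three that are not hereditary, and handling \textbf{P2} by a direct computation of $p(K_{n|k})$ for each parameter in the list.

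For \textbf{P1}, seven of the ten parameters are hereditary and hence satisfy \textbf{P1} automatically by the remark above: the number of vertices, number of edges, and maximum degree are plainly non-increasing under vertex deletion; the independence number is non-increasing because a stable set of an induced subgraph is a stable set of the host graph; the chromatic index is non-increasing because a proper edge-colouring restricts to any subgraph; and treewidth and pathwidth are minor-monotone, hence non-increasing under taking induced subgraphs. The remaining three parameters---minimum degree, vertex connectivity, and edge connectivity---are not hereditary, so for these I would argue directly. The key observation is that any induced subgraph $H$ of $K_{n|k}$ is again complete multipartite, with each vertex lying in a surviving part $P_i$ having degree $\abs{V(H)}-\abs{V(H)\cap P_i}$. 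Hence $\delta(H)$ equals the number of vertices of $H$ lying outside a largest surviving part, which is a sum over at most $k-1$ parts, each of size at most $n$, so $\delta(H)\leq n(k-1)$. Combining this with Whitney's inequality $\kappa(H)\leq\lambda(H)\leq\delta(H)$ and the fact that $\kappa(K_{n|k})=\lambda(K_{n|k})=\delta(K_{n|k})=n(k-1)$ establishes \textbf{P1} for all three.

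For \textbf{P2}, I would compute $f_k(n)=p(K_{n|k})$ in each case and invoke the remark that, under \textbf{P1}, injectivity is equivalent to strict monotonicity in $n$. Most values are immediate: $\abs{K_{n|k}}=nk$, $\|K_{n|k}\|=\binom{k}{2}n^2$, $\alpha(K_{n|k})=n$, and $\delta=\Delta=\kappa=\lambda=\operatorname{tw}=\operatorname{pw}=n(k-1)$, where the treewidth and pathwidth values follow from the cograph/join structure of $K_{n|k}$ (or from the known formula $N-\max_i n_i$ for complete multipartite graphs), together with the matching path decomposition that keeps $k-1$ full parts in every bag and slides the last part through one vertex at a time. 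Each of these is efficiently computable and strictly increasing in $n$ for every fixed $k\geq2$.

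The main obstacle is the chromatic index. Here I would use the classification of complete multipartite graphs into Class~1 and Class~2: such a graph is Class~2 precisely when it is overfull, and a short counting argument comparing $\|K_{n|k}\|=\binom{k}{2}n^2$ with $\Delta(K_{n|k})\lfloor nk/2\rfloor$ shows that $K_{n|k}$ is overfull if and only if $nk$ is odd. Consequently $\chi'(K_{n|k})=n(k-1)$ when $nk$ is even and $\chi'(K_{n|k})=n(k-1)+1$ when $nk$ is odd. The remaining subtlety is that this parity-dependent value must still be strictly increasing in $n$; I would confirm this by a short case analysis on the parities of $k$ and $n$, the only delicate step being the transition from odd $n$ to even $n$ when $k$ is odd, where the increment equals $k-2\geq1$. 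This completes the verification of \textbf{P2}, and hence the lemma.
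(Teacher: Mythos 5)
Your proposal is correct and follows essentially the same route as the paper: verify \textbf{P1} via heredity (or, for minimum degree and the two connectivities, via the degree bound on induced subgraphs of $K_{n|k}$ combined with $\kappa\leq\lambda\leq\delta$), and verify \textbf{P2} by computing $p(K_{n|k})$ explicitly for each parameter, including the parity-dependent chromatic index formula. The only differences are cosmetic: you derive the treewidth/pathwidth and chromatic index values (via the explicit path decomposition and the overfull criterion) where the paper cites Fijav\v{z}--Wood and Hoffman--Rodger, and you spell out the strict-monotonicity check for the chromatic index that the paper compresses into ``since $k\geq2$.''
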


\begin{proof}
    It is immediate that \textbf{P1} holds for number of vertices, number of edges, minimum degree and maximum degree. Now 
    \begin{equation}
        \abs{K_{n|k}} = kn, \quad \|K_{n|k}\| = \binom{k}{2}n^2, \quad \delta(K_{n|k}) = (k-1)n, \quad \Delta(K_{n|k}) = (k-1)n. \notag
    \end{equation}
    For fixed $k$, the corresponding functions of $n$ are all efficiently computable and injective. It follows that \textbf{P2} holds for number of vertices, number of edges, minimum degree and maximum degree. 
    
    If $H$ is a subgraph of $K_{n|k}$, then any proper edge-colouring of $K_{n|k}$ can be restricted to give a proper edge-colouring of $H$. It follows that the chromatic index satisfies \textbf{P1}. Furthermore, it follows from the main theorem of Ref.~\cite{hoffman1992chromatic} that
    \begin{equation}
        \chi'(K_{n|k}) = \begin{cases} 
            (k-1)n & \text{if $kn$ even}, \\
            (k-1)n+1 & \text{if $kn$ odd}. \notag
        \end{cases}
    \end{equation}
    It follows that \textbf{P2} holds for the chromatic index, since $k\geq2$.
    
    Next, recall that for any graph $H$ the vertex and edge connectivity satisfy
    \begin{equation}
        \kappa(H)\leq\lambda(H)\leq\delta(H). \notag
    \end{equation}
    If $H$ is a subgraph of $K_{n|k}$, then \mbox{$\delta(H)\leq(k-1)n$}, and \textbf{P1} and \textbf{P2} follow after observing that
    \begin{equation}
        \kappa(K_{n|k}) = (k-1)n, \quad \lambda(K_{n|k}) = (k-1)n. \notag
    \end{equation}
    
    Now consider treewidth. Let $H$ be a subgraph of $G$ and let $(B_i)_{i=1}^d$ be a tree decomposition of $G$ with treewidth $\operatorname{tw}(G)$. Then \mbox{$(B_i \cap V(H))_{i=1}^d$} is a tree decomposition of $H$ with width at most $\operatorname{tw}(G)$. Hence \mbox{$\operatorname{tw}(H)\leq\operatorname{tw}(G)$}, so treewidth is a monotone parameter. In particular, treewidth satisfies property \textbf{P1}. The same argument holds for pathwidth by restricting to tree decompositions such that the tree is a path. Fijav\v{z} and Wood~\cite[Lemma 8.3]{fijavvz2010graph} proved that when $k\geq2$,
    \begin{equation}
        \operatorname{tw}(K_{n|k}) = \operatorname{pw}(K_{n|k}) = (k-1)n. \notag
    \end{equation}
    Hence both treewidth and pathwidth satisfy property \textbf{P2} with \mbox{$f_k(n)=(k-1)n$}, completing the proof.
\end{proof}

We remark that several parameters listed in the above lemma are monotone (in the sense that we defined earlier: that is, they are monotonically non-increasing with respect to taking subgraphs), while minimum degree, vertex connectivity, and edge connectivity are induced multipartite graph parameters but are not monotone. Some other natural graph invariants are not induced multipartite graph parameters, including the chromatic number $\chi(G)$, clique number $\omega(G)$, and domination number $\gamma(G)$. These parameters satisfy \textbf{P1} but do not satisfy \textbf{P2}, since
\begin{equation}
    \chi(K_{n|k}) = k, \quad \omega(K_{n|k}) = k, \quad \gamma(K_{n|k}) = \min\{2,n\} \notag
\end{equation}
and hence the corresponding function \mbox{$f_k(n)=k$} is not injective.

\section{Hardness Results}
\label{section:HardnessResults}

In this section we shall present two hardness results concerning induced multipartite graph parameters. First, we consider the problem of deciding whether the maximum value of a particular induced multipartite graph parameter $p$ over induced $k$-partite subgraphs is at most a specified constant.

\begin{flushleft}
\begin{tabular}{ll}
    \multicolumn{2}{l}{\textsc{Induced $k$-Partite Subgraph Parameter}.} \\
    \textit{Instance:} & \quad A graph $G$. \\
    \textit{Parameters:} & \quad Natural numbers $k\geq2$ and $\ell$. \\
    \textit{Problem:} & \quad Decide whether \mbox{$p(G,k)$} is at most $\ell$.
\end{tabular}
\end{flushleft}

We shall prove that this problem is \mbox{\textsc{W[1]}-hard}. To achieve this, we show a reduction from the maximum stable set problem, which is known to be \mbox{\textsc{W[1]}-complete}~\cite{downey1999parameterized}. A \emph{stable set} (also called an independent set) in a graph $G$ is a set of vertices with no edges between them. The \emph{independence number} $\alpha(G)$ of a graph $G$ is the order of a largest stable set in $G$. The maximum stable set problem is defined as follows.

\begin{flushleft}
\begin{tabular}{ll}
    \multicolumn{2}{l}{\textsc{Maximum Stable Set}.} \\
    \textit{Instance:} & \quad A graph $G$. \\
    \textit{Parameter:} & \quad A natural number $m$. \\
    \textit{Problem:} & \quad Decide whether $\alpha(G)$ is at most $m$.
\end{tabular}
\end{flushleft}

We now state and prove our main hardness result.
\begin{theorem}
    \label{theorem:InducedKPartiteSubgraphParameterHardness}
    \textsc{Induced $k$-Partite Subgraph Parameter} is \mbox{\emph{\textsc{W[1]}-hard}} for any induced multipartite graph parameter.
\end{theorem}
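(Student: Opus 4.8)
The plan is to reduce from \textsc{Maximum Stable Set}, which is \textsc{W[1]}-complete. Given an instance $(G, m)$ of \textsc{Maximum Stable Set}, I want to construct a graph $G'$ together with suitable parameters $k$ and $\ell$ such that $p(G', k) \leq \ell$ if and only if $\alpha(G) \leq m$. The key observation is that property \textbf{P2} lets me ``read off'' the size of a complete $k$-partite subgraph from the value of $p$: since $f_k$ is injective (equivalently, by the remark after the definition, strictly increasing once \textbf{P1} is assumed), the value $p(K_{n|k}) = f_k(n)$ increases with $n$, so a threshold on $p$ becomes a threshold on the part-size $n$. The goal is to arrange the construction so that the largest induced complete $k$-partite subgraph of $G'$ encodes the largest stable set of $G$.

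\smallskip

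First I would fix $k = 2$ (the smallest allowed value) and build $G'$ from $G$ by taking two disjoint copies of $V(G)$, say $V_1$ and $V_2$, and joining $u \in V_1$ to $w \in V_2$ precisely when $u$ and $w$ correspond to the same vertex of $G$ \emph{or} to adjacent vertices of $G$; within $V_1$ and within $V_2$ I would keep (a copy of) the edges of $G$. The intention is that an induced complete bipartite subgraph $K_{n|2}$ of $G'$ with parts drawn from $V_1$ and $V_2$ forces the chosen vertices to be pairwise non-adjacent in $G$ (so that the internal edges do not appear), i.e.\ forces a stable set, while the cross edges supply the complete bipartite connections. Thus a stable set of size $s$ in $G$ should correspond to an induced $K_{s|2}$ in $G'$, and conversely. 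Then $p(G', 2) = f_2(\alpha(G))$ by \textbf{P1} and the monotonicity of $f_2$, and I set the threshold $\ell = f_2(m)$, which is efficiently computable by \textbf{P2}. This yields $p(G',2) \leq \ell \iff f_2(\alpha(G)) \leq f_2(m) \iff \alpha(G) \leq m$.

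\smallskip

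The main obstacle I anticipate is twofold. The first difficulty is making the correspondence between induced $k$-partite subgraphs of $G'$ and stable sets of $G$ exact in \emph{both} directions: I must ensure that the maximum of $p$ over \emph{all} induced $k$-partite subgraphs (not just the ``aligned'' complete $k$-partite ones with parts sitting in $V_1, V_2$) is still controlled by $\alpha(G)$, since $p(G',k)$ maximises over induced $k$-partite subgraphs and \textbf{P1} only directly bounds $p$ of an induced subgraph of a \emph{complete} $k$-partite graph. The cleanest route is to guarantee that the largest induced \emph{complete} $k$-partite subgraph of $G'$ has part-size exactly $\alpha(G)$, and then invoke \textbf{P1} together with strict monotonicity of $f_k$ to conclude that no induced $k$-partite subgraph can push $p$ above $f_k(\alpha(G))$. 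The second difficulty is that the reduction must be parameter-preserving: the new parameters $(k, \ell) = (2, f_2(m))$ must depend only on the original parameter $m$ (and the fixed parameter $p$), which holds since $f_2$ is efficiently computable, so the reduction is a genuine \textsc{fpt}-reduction.

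\smallskip

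Finally I would verify the two implications carefully. For the forward direction, a stable set $S$ of size $s$ in $G$ yields an induced $K_{s|2}$ in $G'$ (using the cross edges and the absence of internal edges among $S$), so $p(G',2) \geq f_2(\alpha(G))$; for the reverse, any induced complete bipartite subgraph of $G'$ must avoid internal edges and hence selects a stable set, giving $p(G',2) = f_2(\alpha(G))$ via \textbf{P1}. I expect the construction of $G'$ may need minor adjustment (for instance, restricting which internal or cross edges are present) to make both directions tight, and getting the gadget exactly right so that every maximal induced $k$-partite piece corresponds to a stable set is the crux of the argument.
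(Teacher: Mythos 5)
Your overall strategy --- reduce from \textsc{Maximum Stable Set}, blow $G$ up into disjoint copies so that stable sets become the parts of a complete multipartite subgraph, and use the strict monotonicity of $f_k$ (which follows from \textbf{P1} and \textbf{P2}) to convert the threshold $\alpha(G)\leq m$ into the threshold $p(\cdot,k)\leq f_k(m)$ --- is exactly the paper's. But your gadget is wrong, and not in a way that a ``minor adjustment'' can absorb: with your cross-edge rule ($u_1$ adjacent to $w_2$ iff $u=w$ or $uw\in E(G)$), a stable set $S$ of $G$ induces on its two copies a \emph{perfect matching}, not $K_{|S|,|S|}$, because for distinct $u,w\in S$ the cross edge $u_1w_2$ is absent precisely on account of $S$ being stable --- the cross edges you need are exactly the ones your rule omits. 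Concretely, let $G$ be the empty graph on $n\geq 2$ vertices, so $\alpha(G)=n$, and let $p$ be treewidth, so $f_2(n')=n'$. Your $G'$ is then a perfect matching on $2n$ vertices, every induced subgraph of which has treewidth at most $1$, so $p(G',2)=1\leq f_2(m)$ for every $m\geq 1$; your reduction therefore answers \textsc{yes} even when $\alpha(G)=n>m$. (It is only a coincidence that your gadget happens to work for parameters like independence number or vertex count; the theorem must hold for \emph{every} induced multipartite graph parameter.) The fix is the opposite rule, which is the paper's construction: join \emph{every} vertex of one copy to \emph{every} vertex of each other copy, i.e.\ take the lexicographic product $K_k\cdot G$ (the paper does this for every $k\geq 2$, not just $k=2$, though fixing $k=2$ would suffice for \textsc{W[1]}-hardness). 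Then the within-copy edges of $G$ force each part of an induced complete multipartite subgraph to be a stable set of $G$, and the all-present cross edges ensure that a maximum stable set placed in each copy induces $K_{\alpha(G)|k}$, which is what makes the forward direction go through.

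On the reverse direction, your instinct that the crux is bounding $p$ over \emph{all} induced $k$-partite subgraphs, and not only the complete multipartite ones, is well placed: \textbf{P1} constrains $p$ only on graphs that occur as induced subgraphs of some $K_{n|k}$, and an induced $k$-partite subgraph of the blown-up graph need not be of this form (for instance, if $G$ contains an induced $P_4$, then so does a single copy of $G$ inside $K_k\cdot G$, and $P_4$ is bipartite yet is not an induced subgraph of any complete multipartite graph). However, your proposed resolution --- control the largest induced \emph{complete} $k$-partite subgraph and then ``invoke \textbf{P1} together with strict monotonicity of $f_k$'' --- does not reach such subgraphs, so the key claim $p(G',k)\leq f_k(\alpha(G))$ remains unproven in your write-up. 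For comparison, the paper closes this step by asserting that every induced $k$-partite subgraph of $K_k\cdot G$ is an induced subgraph of $H_0$ (the complete $k$-partite graph obtained from maximum stable sets); that assertion is immediate for complete multipartite induced subgraphs, but for general induced $k$-partite subgraphs it is precisely the point that a fully rigorous argument --- yours or the paper's --- needs to address.
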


\begin{proof}
    Let $G$ be an instance of \textsc{Maximum Stable Set} with parameter $m$. For any integer $k\geq 2$, form the graph $G_k$ by taking $k$ disjoint copies of $G$ and adding an edge from every vertex in each copy of $G$ to every vertex in every other copy of $G$. (That is, \mbox{$G_k = K_k \cdot G$} is the lexicographic product of $K_k$ and $G$, see for example Ref.~\cite{hammack2011handbook}.) Let $p$ be any induced multipartite graph parameter and let $f_k$ be the function from \textbf{P2} for this parameter, that is, such that \mbox{$p(K_{n|k})=f_k(n)$} for all positive integers $n$.
    
    First we claim that \mbox{$p(G_k,k)=f_k(\alpha(G))$}, where, recall, $p(G_k,k)$ denotes the maximum of $p(H)$ over all induced $k$-partite subgraphs of $G_k$. To see this, let $H_0$ be an induced subgraph of $G_k$ formed by taking a maximum stable set in each copy of $G$. Then $H_0$ is an induced subgraph of $G_k$ which is a complete $k$-partite graph with $\alpha(G)$ vertices in each part of the vertex partition. Hence \mbox{$p(H_0)=f_k(\alpha(G))$}. Furthermore, if $H$ is any induced $k$-partite subgraph of $G_k$, then $H$ is isomorphic to a subgraph of $H_0$. Then it follows from \textbf{P1} that \mbox{$p(H) \leq p(H_0)$}, which proves the claim.
    
    Next, note that \textbf{P1} and \textbf{P2} imply that $f_k$ is a strictly increasing function of $n$. It follows that \mbox{$\alpha(G) \leq m$} if and only if \mbox{$p(G_k,k) \leq f_k(m)$}. Therefore we can decide whether $\alpha(G) \leq m$ by constructing $G_k$ (in polynomial time), efficiently computing $f_k(m)$ and testing whether \mbox{$p(G_k,k) \leq f_k(m)$}. Hence \textsc{Induced $k$-Partite Subgraph Parameter} is at least as hard as \textsc{Maximal Stable Set}, which is \mbox{\textsc{W[1]}-complete}.
\end{proof}

\mbox{Theorem~\ref{theorem:InducedKPartiteSubgraphParameterHardness}} provides a hardness result for a wide range of induced multipartite graph parameters. In particular, it implies that deciding whether the induced bipartite pathwidth of a graph is at most a constant is \mbox{\textsc{W[1]}-hard}. This answers a question raised by Dyer, Greenhill, and M\"uller~\cite{dyer2021counting}. When the function $f_k$ is linear, the reduction of \mbox{Theorem~\ref{theorem:InducedKPartiteSubgraphParameterHardness}} also provides us with hardness of approximability results.

\begin{corollary}
    \textsc{Induced $k$-Partite Subgraph Parameter} is \mbox{\emph{Poly-\textsc{APX}-hard}} for any induced multipartite graph parameter $p$ such that $f_k$ is linear.
\end{corollary}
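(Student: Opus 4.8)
The plan is to promote the reduction of \mbox{Theorem~\ref{theorem:InducedKPartiteSubgraphParameterHardness}} into an approximation-preserving reduction from the optimisation version of \textsc{Maximum Stable Set} (computing $\alpha(G)$) to the optimisation version of \textsc{Induced $k$-Partite Subgraph Parameter} (computing $p(G,k)$). The starting point is the exact value correspondence already established in that theorem, namely $p(G_k,k)=f_k(\alpha(G))$ for $G_k=K_k\cdot G$. Writing the linear function as $f_k(n)=a_kn+b_k$ with $a_k\geq1$ and $b_k\geq0$ (and $b_k=0$ in the canonical examples such as treewidth, pathwidth, vertex and edge connectivity, minimum and maximum degree, and independence number), this is an affine, strictly increasing map between the two optima. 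Since it is classical that the optimisation version of \textsc{Maximum Stable Set} is \mbox{Poly-\textsc{APX}-complete}, it suffices to exhibit the instance map $G\mapsto G_k$ together with a polynomial-time solution map that transfers approximation guarantees back, with the loss controlled by $a_k$ and $b_k$.

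For the solution map I would proceed as follows. A feasible solution of the target instance is an induced $k$-partite subgraph $H$ of $G_k$, presented together with a proper $k$-colouring having colour classes $T_1,\dots,T_k$. Each $T_j$ is a stable set of $H$, hence of $G_k$; but every stable set of $G_k$ lies within a single copy of $G$, so each $T_j$ is a stable set of $G$. Returning a colour class $T$ of maximum size therefore yields, in polynomial time, a stable set of $G$ with $\abs{T}\geq\abs{V(H)}/k$. The forward condition is immediate: $p(G_k,k)=a_k\alpha(G)+b_k\leq(a_k+b_k)\,\alpha(G)$ whenever $\alpha(G)\geq1$, so the target optimum is at most a constant multiple of the source optimum.

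The crux, and the main obstacle, is the backward error bound: I must show that the extracted stable set is large relative to $p(H)$, concretely that $p(H)\leq f_k(\abs{T})$, so that $\abs{T}\geq f_k^{-1}(p(H))=(p(H)-b_k)/a_k$. Combined with $\alpha(G)=(p(G_k,k)-b_k)/a_k$ and the linearity of $f_k$, this turns a factor-$\rho$ approximation of $p(G_k,k)$ into a factor-$\rho$ approximation of $\alpha(G)$ when $b_k=0$ (and a factor $\rho(1+o(1))$ in general, the additive constant being absorbed by standard amplification of the \textsc{Maximum Stable Set} instance). The difficulty is that property \textbf{P1} only bounds $p$ on induced subgraphs of complete multipartite graphs, whereas a feasible $H$ need not be complete multipartite; the proof of \mbox{Theorem~\ref{theorem:InducedKPartiteSubgraphParameterHardness}} meets the same point when bounding $p(H)$ from above. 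I would resolve this by passing, without loss of generality, to complete multipartite solutions: the complete multipartite graph $K^\ast$ on the colour classes $T_1,\dots,T_k$ is an induced subgraph of $K_{\abs{T}\,|\,k}$, so \textbf{P1} gives $p(K^\ast)\leq f_k(\abs{T})$ on this canonical form. Verifying that $p(H)\leq p(K^\ast)$ for every feasible $H$ is the one step that genuinely uses more than \textbf{P1} alone — for the listed parameters it follows from monotonicity under edge addition, with independence number handled directly via $\alpha(H)\leq\alpha(G)$ — and is where I would concentrate the effort. Once it is in place, the linearity of $f_k$ transfers \mbox{Poly-\textsc{APX}-hardness} from \textsc{Maximum Stable Set} to \textsc{Induced $k$-Partite Subgraph Parameter}, completing the proof.
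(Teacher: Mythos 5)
Your overall route is the same as the paper's: reuse the reduction \mbox{$G \mapsto G_k = K_k \cdot G$} from Theorem~\ref{theorem:InducedKPartiteSubgraphParameterHardness} and argue that, when $f_k$ is linear, this reduction preserves approximation, so that \mbox{Poly-\textsc{APX}-hardness} is inherited from \textsc{Maximum Stable Set}. The paper compresses this into a single sentence; you carry out the verification in the formal framework (instance map plus solution map). Your forward bound is fine, and so is the observation that every stable set of $G_k$ lies inside a single copy of $G$, so the largest colour class of a feasible solution is a stable set of $G$.

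However, the step you yourself flag as the crux is a genuine gap, and your proposed patch does not prove the corollary in its stated generality. The corollary is claimed for \emph{any} induced multipartite graph parameter with $f_k$ linear, but properties \textbf{P1} and \textbf{P2} constrain $p$ only on complete multipartite graphs (induced subgraphs of $K_{n|k}$ are again complete multipartite). A feasible solution $H$ need not be complete multipartite: for $k=2$, an induced path on four vertices sitting inside a single copy of $G$ is an induced bipartite subgraph of $G_2$ that is not complete bipartite. Consequently nothing in \textbf{P1}/\textbf{P2} bounds $p(H)$ in terms of the largest colour class $T$: the parameter defined by \mbox{$p(H)=|V(H)|$} if $H$ is complete multipartite and \mbox{$p(H)=2^{|V(H)|}$} otherwise satisfies \textbf{P1} and \textbf{P2} with the linear function \mbox{$f_k(n)=kn$}, yet for it your key inequality \mbox{$p(H)\leq f_k(|T|)$} fails badly. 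Your edge-monotonicity patch (with independence number treated separately) therefore establishes the corollary only for those particular parameters, not for all parameters with linear $f_k$, so as written the proposal proves a strictly weaker statement. In fairness, this gap is inherited from the paper rather than introduced by you: the paper's proof of Theorem~\ref{theorem:InducedKPartiteSubgraphParameterHardness} asserts that every induced $k$-partite subgraph of $G_k$ is an induced subgraph of $H_0$, which the same $P_4$ example refutes, and the corollary's one-sentence proof silently relies on that assertion. Your write-up at least makes the missing hypothesis explicit; to close the argument one must either add such a hypothesis (e.g.\ monotonicity under edge addition) to the corollary, or repair the claim inside Theorem~\ref{theorem:InducedKPartiteSubgraphParameterHardness} by some other means.
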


\begin{proof}  
    The reduction given in \mbox{Theorem~\ref{theorem:InducedKPartiteSubgraphParameterHardness}} is approximation-preserving when $f_k$ is linear, and so \mbox{Poly-\textsc{APX}-hardness} is
    inherited from the \textsc{Maximal Stable Set}
    problem~\cite{bazgan2005completeness}.
\end{proof}

Recall that $\abs{G}$ denotes the order of $G$, which is the number of vertices in $G$. We now consider the following problem.

\begin{flushleft}
\begin{tabular}{ll}
    \multicolumn{2}{l}{\textsc{Large Induced $k$-Partite Subgraph Parameter.}} \\
    \textit{Instance:} & \quad A graph $G$. \\
    \textit{Parameter:} & \quad Natural numbers $k\geq2$, $\ell$, and $m$. \\
    \textit{Problem:} & \quad Decide whether there is an induced $k$-partite subgraph $H$ of $G$ \\ & \quad such that $p(H)\leq\ell$ and \mbox{$\abs{H}\geq\abs{G}-m$}.
\end{tabular}
\end{flushleft}

\mbox{In Theorem~\ref{theorem:LargeInducedKPartiteHardness}} we show that this problem is \mbox{para-\textsc{NP}-hard} for maximum degree, minimum degree, vertex connectivity, edge connectivity, and chromatic index, while in \mbox{Section~\ref{section:FixedParameterTractability}}, we show that the problem is fixed-parameter tractable for the number of vertices and number of edges.

Recall that a problem is \mbox{para-\textsc{NP}-hard} if it is \mbox{\textsc{NP}-hard} for some constant value of the parameter $k$. We shall argue by reduction involving the following problem.

\begin{flushleft}
\begin{tabular}{ll}
    \multicolumn{2}{l}{\textsc{Tripartite Maximum Degree $4$}.} \\
    \textit{Instance:} & \quad A graph $G$ with maximum degree $4$. \\
    \textit{Problem:} & \quad Decide whether $G$ is tripartite.
\end{tabular}
\end{flushleft}

The problem \textsc{Tripartite Maximum Degree $4$} is \mbox{\textsc{NP}-complete}, see Ref.~\cite{garey1974some}. Note that a graph $G$ is tripartite if and only if $\chi(G)\leq3$ (that is, the vertices of $G$ can be coloured with $3$ colours so that adjacent vertices receive different colours).

\begin{theorem}
    \label{theorem:LargeInducedKPartiteHardness}
    \textsc{Large Induced $k$-Partite Subgraph Parameter} is \mbox{\emph{para-\textsc{NP}-hard}} for the following induced multipartite graph parameters: maximum degree, minimum degree, vertex connectivity, edge connectivity, and chromatic index.
\end{theorem}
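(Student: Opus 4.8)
The plan is to reduce from \textsc{Tripartite Maximum Degree $4$}, which is \textsc{NP}-complete. Given an instance $G$ of that problem (so $\Delta(G) \leq 4$), I would take $G$ itself as the instance of \textsc{Large Induced $k$-Partite Subgraph Parameter}, and fix the parameters to the constants $k = 3$ and $m = 0$, with $\ell$ a constant chosen according to the parameter $p$. Since the reduction leaves $G$ unchanged and assigns constant values to $k$, $\ell$, and $m$, establishing \textsc{NP}-hardness of the resulting constant-parameter problem yields para-\textsc{NP}-hardness.

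The key observation is that taking $m = 0$ makes the size condition $\abs{H} \geq \abs{G} - m = \abs{G}$ force $H = G$, because $H$ is an induced subgraph. Hence a valid $H$ exists if and only if $G$ is itself $3$-partite (that is, tripartite) and $p(G) \leq \ell$. It therefore suffices to choose $\ell$ so that the bound $p(G) \leq \ell$ holds automatically for every graph of maximum degree $4$; the remaining condition is then exactly that $G$ be tripartite, matching the \textsc{Tripartite Maximum Degree $4$} instance.

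For minimum degree, maximum degree, vertex connectivity, and edge connectivity I would invoke the standard chain $\kappa(G) \leq \lambda(G) \leq \delta(G) \leq \Delta(G) \leq 4$ and set $\ell = 4$, so that $p(G) \leq 4 = \ell$ holds for the max-degree-$4$ graph $G$. For chromatic index I would instead appeal to Vizing's theorem, which gives $\chi'(G) \leq \Delta(G) + 1 \leq 5$, and set $\ell = 5$. In each case $p(G) \leq \ell$ is satisfied vacuously, so the decision reduces precisely to testing whether $G$ is tripartite.

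The main thing to get right is exactly this vacuity of the parameter bound under the degree-$4$ hypothesis: it is what pins down the correct value of $\ell$ for each parameter and explains why the argument covers precisely the degree-bounded parameters. I do not expect a genuine obstacle beyond bookkeeping, but I would remark that the same scheme cannot work for the parameters shown later to be fixed-parameter tractable (independence number, treewidth, pathwidth, order, and size), since none of these is bounded by a function of the maximum degree, so the bound $p(G) \leq \ell$ could not be made vacuous.
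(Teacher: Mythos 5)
Your proposal is correct and follows essentially the same route as the paper: a reduction from \textsc{Tripartite Maximum Degree $4$} with $k=3$, $m=0$, and $\ell=4$ (degree-based parameters) or $\ell=5$ (chromatic index, via Vizing's theorem), so that the parameter bound holds vacuously and the question reduces to tripartiteness. Your closing remark about why the scheme cannot extend to the fixed-parameter tractable cases is a nice addition, echoing the paper's observation that the argument works for any parameter bounded on $4$-regular graphs.
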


\begin{proof}
    We show that \textsc{Tripartite Maximum Degree $4$} can be reduced to \textsc{Large Induced $k$-partite Subgraph Parameter} for any of these problems. Let $G$ be a graph with maximum degree $4$ that is an instance of \textsc{Tripartite Maximum Degree $4$}. We shall give $G$ as input to \textsc{Large Induced $k$-Partite Subgraph Parameter} with $k=3$ and $m=0$, and with the constant $\ell$ defined so that $p(G)\leq\ell$. Clearly we can take $\ell=4$ when $p$ is maximum degree, minimum degree, vertex connectivity or edge connectivity, Vizing's theorem~\cite{vizing1964estimate} guarantees that we can set $\ell=5$ when $p$ is the chromatic index. Therefore $G$ is tripartite if and only if \textsc{Large Induced $k$-Partite Subgraph Parameter} outputs \textsc{yes} for $G$ with parameters $k=3$, $m=0$ and with $\ell$ as defined above. This completes the reduction and proves that \textsc{Large Induced $k$-Partite Subgraph Parameter} is \mbox{para-\textsc{NP}-hard} for each of the problems listed in the theorem statement.
\end{proof}

We note that above proof works for any induced multipartite graph parameter that takes bounded values on $4$-regular graphs.

\section{Fixed-Parameter Tractability}
\label{section:FixedParameterTractability}

In this section we shall present two fixed-parameter tractability results concerning induced multipartite graph parameters. Specifically, we show that \textsc{Large Induced $k$-Partite Subgraph Parameter} is fixed-parameter tractable for number of vertices, and number of edges.

\begin{theorem}
    \textsc{Large Induced $k$-Partite Subgraph Parameter} is fixed-parameter tractable for number of vertices.
\end{theorem}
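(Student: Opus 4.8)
The plan is to exploit the fact that, when $p$ is the number of vertices, the two size requirements on the sought subgraph $H$ are in direct tension and force the entire input to be small. Concretely, here $p(H)=\abs{H}$, so the conditions $p(H)\leq\ell$ and $\abs{H}\geq\abs{G}-m$ together give $\abs{G}-m\leq\abs{H}\leq\ell$, and hence $\abs{G}\leq\ell+m$. This sandwiching bound is the whole content of the argument; everything else is routine.

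First I would dispose of the easy rejection. If $\abs{G}>\ell+m$, then no induced subgraph $H$ can simultaneously satisfy $\abs{H}\leq\ell$ and $\abs{H}\geq\abs{G}-m$, since these would require $\abs{G}-m\leq\ell$. In this case we output \textsc{no}.

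Otherwise $\abs{G}\leq\ell+m$, so the instance has order bounded by a function of the parameters and is effectively a kernel. I would then brute-force over all vertex subsets $S\subseteq V(G)$ with $\abs{S}\leq m$; there are at most $\sum_{i=0}^m\binom{\ell+m}{i}\leq(\ell+m+1)^m$ of these. For each such $S$, set $H=G\setminus S$, so that $\abs{H}\geq\abs{G}-m$ holds automatically, check that $\abs{H}\leq\ell$, and test whether $H$ is $k$-partite using the algorithm of Bj\"orklund et al.~\cite{bjorklund2009set} in time $O\!\left(2^{\abs{G}}\abs{G}\right)=O\!\left(2^{\ell+m}(\ell+m)\right)$. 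We output \textsc{yes} as soon as some $S$ yields a $k$-partite $H$ with $\abs{H}\leq\ell$, and \textsc{no} if none does.

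Correctness is immediate: an admissible subgraph $H$ corresponds exactly to a deletion set $S=V(G)\setminus V(H)$ with $\abs{S}\leq m$ and $\abs{G}-\abs{S}\leq\ell$, and every such set is examined. The total running time is the product of the number of subsets, namely $(\ell+m+1)^m$, and the per-subset cost $O\!\left(2^{\ell+m}(\ell+m)\right)$, both of which depend only on $k$, $\ell$, and $m$; hence the algorithm is fixed-parameter tractable. I do not anticipate a genuine obstacle: the single idea that matters is the bound $\abs{G}\leq\ell+m$, which collapses the problem to a bounded-size instance solvable by exhaustive search, paralleling the independence-number proof but without even needing a separate bound on the order of $H$ beyond the trivial one.
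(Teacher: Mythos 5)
Your proof is correct and is essentially identical to the paper's: the same rejection when $\abs{G}>\ell+m$, the same enumeration of at most $(\ell+m+1)^m$ deletion sets $S$, and the same use of the Bj\"orklund et al.\ algorithm to test $k$-partiteness of each $G{\setminus}S$. No differences worth noting.
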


\begin{proof}
    Let $G$ be the input graph and let $k$, $\ell$, and $m$ be natural numbers with $k\geq2$. If \mbox{$\abs{G}>\ell+m$}, then we output \textsc{no}. Otherwise, consider all possible subsets $S$ of at most $m$ vertices of $G$. There are at most \mbox{$\sum_{i=0}^m\binom{\ell+m}{i}\leq(\ell+m+1)^m$} choices for $S$. For each subset $S$, test whether $G{\setminus}S$ is $k$-partite and satisfies \mbox{$\abs{G{\setminus}S}\leq\ell$}. This can be decided in time \mbox{$O\left(2^{(\ell+m)}(\ell+m)\right)$} using an algorithm of Bj\"orklund et al.~\cite{bjorklund2009set}. If any of these graphs $G{\setminus}S$ are $k$-partite and satisfy \mbox{$\abs{G{\setminus}S}\leq\ell$}, then we output \textsc{yes}, since this \mbox{$H=G{\setminus}S$} is an induced $k$-partite subgraph with \mbox{$\abs{G}-m\leq\abs{H}\leq\ell$}. Otherwise, output \textsc{no}. This gives a fixed-parameter tractable algorithm for \textsc{Large Induced $k$-Partite Subgraph Parameter} for number of vertices.
\end{proof}

\begin{theorem}
    \textsc{Large Induced $k$-Partite Subgraph Parameter} is fixed-parameter tractable for number of edges.
\end{theorem}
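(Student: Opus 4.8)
The plan is to reduce the instance to a kernel whose order is bounded by a function of $\ell$ and $m$, after which a brute-force search suffices. As before, the instance is a \textsc{yes}-instance if and only if there is a set $S\subseteq V(G)$ with $\abs{S}\leq m$ such that $G\setminus S$ is $k$-partite and $\|G\setminus S\|\leq\ell$ (taking $H=G\setminus S$). The obstacle, relative to the case of number of vertices, is that bounding $\|H\|$ by $\ell$ places no bound on $\abs{H}$, since a graph with at most $\ell$ edges may have arbitrarily many isolated vertices. Hence large instances cannot be rejected outright, and a direct search over all $S$ would only yield an \textsc{XP} algorithm; the key is to show that all but boundedly many vertices are either forced into $S$ or irrelevant.

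First I would deal with the high-degree vertices. Since the degree of any vertex is at most the total number of edges, the condition $\|G\setminus S\|\leq\ell$ forces $\Delta(G\setminus S)\leq\ell$. Consequently, any vertex $v$ with $\deg_G(v)>\ell+m$ must belong to $S$: otherwise deleting the at most $m$ vertices of $S$ would leave $\deg_{G\setminus S}(v)\geq\deg_G(v)-m>\ell$, contradicting $\|G\setminus S\|\leq\ell$. Let $D=\{v\in V(G):\deg_G(v)>\ell+m\}$. Then $D\subseteq S$, so if $\abs{D}>m$ we output \textsc{no}; otherwise we commit to deleting $D$, set $G'=G\setminus D$ and $m'=m-\abs{D}$, and observe that every vertex of $G'$ now has degree at most $\ell+m$.

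Next I would bound the number of edges and strip away the irrelevant vertices. As $G'$ has maximum degree at most $\ell+m$, deleting any $m'$ further vertices removes at most $m'(\ell+m)$ edges, so if $\|G'\|>\ell+m(\ell+m)$ we may output \textsc{no}. Otherwise $\|G'\|$ is bounded by a function of $\ell$ and $m$, and hence $G'$ has at most $2\left(\ell+m(\ell+m)\right)$ non-isolated vertices. The isolated vertices of $G'$ can be discarded: they contribute no edges and never obstruct $k$-partiteness, since an isolated vertex may be placed in any part, and deleting one is never beneficial, so we may assume $S$ avoids them. Removing them yields an equivalent instance $G''$ whose order is bounded purely in terms of $\ell$ and $m$.

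Finally, I would solve this bounded instance by brute force: enumerate every subset $S'\subseteq V(G'')$ with $\abs{S'}\leq m'$, of which there are at most a function of $\ell$ and $m$, and for each test whether $G''\setminus S'$ is $k$-partite, for instance using the algorithm of Bj\"orklund et al.~\cite{bjorklund2009set}, and whether $\|G''\setminus S'\|\leq\ell$; we answer \textsc{yes} exactly when some $S=D\cup S'$ succeeds. The overall running time is a function of $k$, $\ell$, and $m$ times a polynomial in $\abs{G}$. I expect the main obstacle to be the handling of the low-degree and isolated vertices: unlike the case of number of vertices, one must first argue that the high-degree vertices are forced into the deletion set, and then verify carefully that discarding the isolated vertices preserves both the edge count and $k$-partiteness while respecting the size constraint $\abs{H}\geq\abs{G}-m$.
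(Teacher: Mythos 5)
Your proposal is correct and follows essentially the same approach as the paper's proof: force the vertices of degree greater than $\ell+m$ into the deletion set (rejecting if there are more than $m$ of them), reject if the remaining edge count exceeds a bound of the form $\ell + m(\ell+m)$, discard isolated vertices, and brute-force over subsets of the boundedly many non-isolated vertices, testing $k$-partiteness via Bj\"orklund et al. Your explicit argument that the high-degree vertices \emph{must} lie in any valid deletion set, and that isolated vertices may be assumed to avoid it, is in fact a slightly more careful justification of the correctness of the same algorithm.
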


\begin{proof}
    Let $G$ be the input graph and $k$, $\ell$, and $m$ natural numbers with $k\geq2$. First count the number of vertices in $G$ with degree greater than $\ell+m$, and let this number be $s$. If $s>m$, then we output \textsc{no} since, regardless of which $m$ vertices we delete from $G$, the resulting graph will have more than $\ell$ edges. Otherwise, delete all $s$ vertices with degree greater than $\ell+m$. This gives an induced subgraph $\widehat{G}$ of $G$ with no vertex of degree greater than $\ell+m$. Define \mbox{$t=(\ell+m)(m-s)+\ell$} and recall that $\|\widehat{G}\|$ denotes the number of edges in $\widehat{G}$. If $\|\widehat{G}\|>t$, then we output \textsc{no}, since deleting $m-s$ vertices from $\widehat{G}$ can reduce the number of edges by at most \mbox{$(\ell+m)(m-s)$}. Otherwise, there are at most $2t$ non-isolated vertices. Ignore any isolated vertices since they can be assigned to arbitrary parts within a $k$-partite subgraph, and deleting them does not reduce the number of edges.
    
    Now consider all possible subsets $S$ of vertices that are non-isolated in $\widehat{G}$, such that \mbox{$\abs{S} \leq m-s$}. There are at most \mbox{$\sum_{i=0}^{m-s}\binom{2t}{i}\leq(2t+1)^{m-s}$}. For each such subset $S$, test whether the graph $\widehat{G}{\setminus}S$ is $k$-partite and satisfies \mbox{$\|\widehat{G}{\setminus}S\|\leq\ell$}. This can be decided in time \mbox{$O\left(2^{2t}t\right)$} using the algorithm of Bj\"orklund et al.~\cite{bjorklund2009set}. If any of these subgraphs $\widehat{G}{\setminus}S$ are $k$-partite and satisfy \mbox{$\|\widehat{G}{\setminus}S\|\leq\ell$}, then we output \textsc{yes}, since \mbox{$H=G{\setminus}S$} is an induced $k$-partite subgraph with \mbox{$\abs{H}\geq\abs{G}-m$} and $\|H\|\leq\ell$. Otherwise, we output \textsc{no}. This gives a fixed-parameter tractable algorithm for \textsc{Large Induced $k$-Partite Subgraph Parameter} when $p$ is number of edges.
\end{proof}

\section{Conclusion and Further Work}
\label{section:ConclusionAndFurtherWork}

We have classified the parameterised complexity of induced multipartite graph parameters. Specifically, we have shown that \textsc{Induced $k$-Partite Subgraph Parameter} is \mbox{\textsc{W[1]}-hard} in general and that \textsc{Large Induced $k$-Partite Subgraph Parameter} is \mbox{para-\textsc{NP}-hard} for minimum degree, maximum degree, vertex connectivity, edge connectivity, and chromatic index; and is fixed-parameter tractable for number of vertices and number of edges.

Several open problems remain, including identifying further induced multipartite graph parameters $p$ and determining whether \textsc{Large Induced $k$-Partite Subgraph Parameter} for $p$ is \mbox{para-\textsc{NP}-hard} or fixed-parameter tractable. Furthermore, it would be interesting to show that the problem \textsc{Induced $k$-Partite Subgraph Parameter} is \mbox{\textsc{W[1]}-complete} for certain parameters.

\section*{Acknowledgements}

We thank Michael Bremner for helpful discussions. RLM was supported by the QuantERA ERA-NET Cofund in Quantum Technologies implemented within the European Union's Horizon 2020 Programme (QuantAlgo project), EPSRC grants EP/L021005/1 and EP/R043957/1, and the ARC Centre of Excellence for Quantum Computation and Communication Technology (CQC2T), project number CE170100012. CG was supported by the Australian Research Council Discovery Project DP190100977. No new data were created during this study.

\bibliography{bibliography}

\begin{thebibliography}{10}

\bibitem{alon2008every}
{\sc Alon, N., and Shapira, A.}
\newblock Every monotone graph property is testable.
\newblock {\em SIAM Journal on Computing 38}, 2 (2008), 505--522.

\bibitem{bazgan2005completeness}
{\sc Bazgan, C., Escoffier, B., and Paschos, V.~T.}
\newblock Completeness in standard and differential approximation classes:
  {P}oly-({D}){APX}- and ({D}){PTAS}-completeness.
\newblock {\em Theoretical Computer Science 339}, 2-3 (2005), 272--292.

\bibitem{bjorklund2009set}
{\sc Bj{\"o}rklund, A., Husfeldt, T., and Koivisto, M.}
\newblock Set partitioning via inclusion-exclusion.
\newblock {\em SIAM Journal on Computing 39}, 2 (2009), 546--563.

\bibitem{bodlaender1996linear}
{\sc Bodlaender, H.~L.}
\newblock A linear-time algorithm for finding tree-decompositions of small
  treewidth.
\newblock {\em SIAM Journal on Computing 25}, 6 (1996), 1305--1317.

\bibitem{courcelle1990monadic}
{\sc Courcelle, B.}
\newblock The monadic second-order logic of graphs. {I}. {Recognizable} sets of
  finite graphs.
\newblock {\em Information and Computation 85}, 1 (1990), 12--75.

\bibitem{courcelle1992monadic}
{\sc Courcelle, B.}
\newblock The monadic second-order logic of graphs {III}:
  {Tree}-decompositions, minors and complexity issues.
\newblock {\em RAIRO-Theoretical Informatics and Applications-Informatique
  Th{\'e}orique et Applications 26}, 3 (1992), 257--286.

\bibitem{downey1999parameterized}
{\sc Downey, R.~G., and Fellows, M.~R.}
\newblock {\em Parameterized Complexity}.
\newblock Springer, 1999.

\bibitem{dyer2021counting}
{\sc Dyer, M., Greenhill, C., and M{\"u}ller, H.}
\newblock Counting independent sets in graphs with bounded bipartite pathwidth.
\newblock {\em Random Structures \& Algorithms 59}, 2 (2021), 204--237.

\bibitem{fijavvz2010graph}
{\sc Fijav{\v{z}}, G., and Wood, D.~R.}
\newblock Graph minors and minimum degree.
\newblock {\em The Electronic Journal of Combinatorics 17}, 1 (2010), \#R151.

\bibitem{garey1974some}
{\sc Garey, M.~R., Johnson, D.~S., and Stockmeyer, L.}
\newblock Some simplified {NP}-complete problems.
\newblock In {\em Proceedings of the Sixth Annual ACM Symposium on Theory of
  Computing\/} (1974), ACM, pp.~47--63.

\bibitem{hammack2011handbook}
{\sc Hammack, R., Imrich, W., and Klav\v{z}ar, S.}
\newblock {\em Handbook of Product Graphs}.
\newblock CRC Press, 2011.

\bibitem{hoffman1992chromatic}
{\sc Hoffman, D.~G., and Rodger, C.~A.}
\newblock The chromatic index of complete multipartite graphs.
\newblock {\em Journal of Graph Theory 16}, 2 (1992), 159--163.

\bibitem{robertson1983graph}
{\sc Robertson, N., and Seymour, P.~D.}
\newblock {Graph minors. I. Excluding a forest}.
\newblock {\em Journal of Combinatorial Theory, Series B 35}, 1 (1983), 39--61.

\bibitem{robertson1984graph}
{\sc Robertson, N., and Seymour, P.~D.}
\newblock {Graph minors. III. Planar tree-width}.
\newblock {\em Journal of Combinatorial Theory, Series B 36}, 1 (1984), 49--64.

\bibitem{vizing1964estimate}
{\sc Vizing, V.~G.}
\newblock On an estimate of the chromatic class of a p-graph.
\newblock {\em Discret. Analiz 3\/} (1964), 25--30.

\end{thebibliography}

\end{document}